\newtheorem{theorem}{\bf Theorem}
\newcommand{\xX}{\mathbf{X}}
\newcommand{\xW}{\mathbf{W}}
\newcommand{\xx}{\xX}
\newcommand{\Uu}{\mathcal{S}_I}
\newcommand{\Ss}{\mathcal{S}_J}
\providecommand{\Ii}{{\mathcal I}} 
\providecommand{\Jj}{\mathcal{J}}
\newcommand{\barxH}{\mathbf{\bar{H}}}
\newcommand{\barxX}{\mathbf{\bar{X}}}
\newcommand{\barxY}{\mathbf{\bar{Y}}}
\newcommand{\barxZ}{\mathbf{\bar{Z}}}
\begin{document}

\title{On the Secure Degrees of Freedom of Wireless $X$ Networks}

\author{\authorblockN{Tiangao Gou, Syed A. Jafar}
\authorblockA{Electrical Engineering and Computer Science\\
University of California Irvine, Irvine, California, 92697\\
Email: \{tgou,syed\}@uci.edu}
\thanks{The work of S. Jafar was
supported by ONR Young Investigator Award N00014-08-1-0872.}}
\maketitle

\begin{abstract}
Previous work showed that the $X$ network with $M$ transmitters, $N$
receivers has $\frac{MN}{M+N-1}$ degrees of freedom. In this work we
study the degrees of freedom of the $X$ network with secrecy
constraints, i.e. the $X$ network where some/all messages are
confidential. We consider the $M \times N$ network where all
messages are secured and show that $\frac{N(M-1)}{M+N-1}$ degrees of
freedom can be achieved. Secondly, we show that if messages from
only $M-1$ transmitters are confidential, then $\frac{MN}{M+N-1}$
degrees of freedom can be achieved meaning that there is no loss of
degrees of freedom because of secrecy constraints. We also consider
the achievable secure degrees of freedom under a more conservative
secrecy constraint. We require that messages from any subset of
transmitters are secure even if other transmitters are compromised,
i.e., messages from the compromised transmitter are revealed to the
unintended receivers. We also study the achievable secure degrees of
freedom of the $K$ user Gaussian interference channel under two
different secrecy constraints where $\frac{1}{2}$ secure degrees of
freedom per message can be achieved. The achievable scheme in all
cases is based on random binning combined with interference
alignment.
\end{abstract}

\section{Introduction}
Security is an important issue if the transmitted information is
confidential. Researchers have studied the information theoretic
secrecy for different channel models. In \cite{Wyner75}, Wyner first
proposed the wiretap channel model to characterize single user
secure communication problem, i.e., a sender transmits a
confidential message to its receiver while keeping a wire-tapper
totally ignorant of the message. The secrecy level is measured by
the equivocation rate, i.e., the entropy rate of the confidential
message conditioned on the received signal at the wire-tapper. More
recent information-theoretic research on secure communication
focuses on multi-user scenarios. In \cite{Yingbin:compound}, the
authors study the compound wire-tap channel where the sender
multicasts its messages to multiple receivers while ensuring the
confidentiality of the messages at multiple wire-tappers. Multiple
access channel with confidential messages has been studied in
\cite{Liang06isit:mac,Liu:ISIT:mac,Ender:GMACwiretap}. Broadcast
channel with confidential messages has been studied in
\cite{Csiszar:broadcast},\cite{Ruoheng:dmib}. The two user discrete
memoryless interference channel with confidential messages is
studied in \cite{Ruoheng:dmib}.

It is well known that the secrecy capacity of the Gaussian wiretap
channel is the difference between the capacities of the main and the
wiretap channels \cite{cheong}. In other words, there is a rate
penalty for ensuring the secrecy. From the degrees of freedom
perspective, this is pessimistic since the channel loses all its
degrees of freedom. Further, even the two user Gaussian interference
channel loses all its degrees of freedom if we have to ensure that
messages from both transmitters are confidential, i.e., a message
should remain secure from the undesired receiver. The results of the
Gaussian wiretap channel and the $2$ user Gaussian interference
channel prompt one to ask whether it is possible for a network to
have positive number of degrees of freedom if the messages in the
network are secure. The answer to this question lies in the study of
the $K$ user Gaussian interference channel with secure messages
\cite{onur} which indeed has positive number of degrees of freedom
if $K > 2$. It is shown that the network has $\frac{K(K-2)}{2K-2}$
secure degrees of freedom. The key to increase the secure degrees of
freedom is interference alignment. Interference signals associated
with the messages needed to be secured are aligned to occupy smaller
dimension so that the secrecy penalty rate is minimized. At the same
time, the degrees of the freedom for the legitimate channel is
maximized by interference alignment. Thus, the tool of interference
alignment serves the dual purpose of minimizing the secrecy penalty
rate and maximizing the rate of the legitimate messages, thus
improving the secure degrees of freedom of the network.

In this paper, we generalize the result of \cite{onur} to the $X$
network. We study the achievable secure degrees of freedom of the $M
\times N$ user wireless $X$ network, i.e., a network with $M$
transmitters and $N$ receivers where independent confidential
messages need to be conveyed from each transmitter to each receiver.
$X$ networks are interesting since they encompasses different
communication scenarios. For example, each transmitter is associated
with a broadcast channel, each receiver is associated with a
multiple access channel and every pair of transmitters and receivers
comprises an interference channel. In other words, broadcast
channel, multiple access channel and interference channel are
special cases of $X$ networks. In addition, interference alignment
is also feasible on $X$ networks. In \cite{Cadambe_Jafar_X},
interference alignment schemes are constructed to achieve
$\frac{1}{M+N-1}$ degrees of freedom per frequency/time slot for
each message without secrecy constraint. In this paper, we exploit
alignment of interference to assist secrecy in the network. We study
the achievable secure degrees of freedom under four different
secrecy constraints. We show that if the set of all unintended
messages is secured at each receiver, then each message can achieve
$\frac{M-1}{M}\frac{1}{M+N-1}$ secure degrees of freedom for a total
of $\frac{N(M-1)}{M+N-1}$ secure degrees of freedom. In other words,
only a fraction $\frac{1}{M}$ degrees of freedom is lost under this
secrecy constraint. Interestingly, if we only secure the set of
unintended messages from any $M-1$ transmitters at each receiver,
then each message can achieve $\frac{1}{M+N-1}$ secure degrees of
freedom which is the same as what one can achieve without secrecy
constraint. This corresponds to a scenario where one transmitter's
messages need not be secure, perhaps because their confidentiality
is ensured cryptographically, by some higher layer. In this case,
the other messages increase their degrees of freedom by exploiting
this. Next, we consider a more conservative secrecy constraint.
Transmitters do not trust each other, so we require that even if any
subset of transmitters $\mathcal{S}$ is compromised, i.e., the
messages from the compromised transmitter are revealed to the
unintended receivers (through a genie), the remaining transmitters'
messages are still secure. For this case, we show that if the set of
all unintended messages is secured then
$\frac{N(M-|\mathcal{S}|-1)}{M+N-1}$ secure degrees of freedom can
be achieved for the remaining $(M-|\mathcal{S}|)\times N$ users. If
we only need to secure the set of unintended messages from
$M-|\mathcal{S}|-1$ transmitters, then $\frac{1}{M+N-1}$ secure
degrees of freedom can be achieved for each message. The achievable
scheme for all cases is based on random binning combined with
interference alignment.

\section{System Model and Secrecy Constraints}
\subsection{System Model}
The $M \times N$ user $X$ network is comprised of $M$ transmitters
and $N$ receivers. Each transmitter has an independent message for
each receiver. The channel output at the $j^{th}$ receiver over the
$f^{th}$ frequency slot and the $t^{th}$ time slot is described as
follows:
\begin{equation*}
Y_j(f,t)=\sum_{i=1}^{M}H_{ji}(f)X_i(f,t)+Z_j(f,t),~j=1,2,\ldots,N
\end{equation*}
where $X_i(f,t)$ is the input signal at Transmitter $i$, $H_{ji}(f)$
is the channel coefficient from Transmitter $i$ to Receiver $j$ and
$Z_j(f,t)$ represents the additive white Gaussian noise (AWGN) at
Receiver $j$. We assume the channel coefficients vary across
frequency slots but remain constant in time and are drawn from a
continuous distribution. We assume all channel coefficients are
known to all transmitters and receivers. Using the symbol extension
channel in \cite{Cadambe_Jafar_X}, the input-output relationship is
characterized as follows:
\begin{equation}
\barxY_j(t) = \displaystyle\sum_{i=1}^{M}\barxH_{ji}
\barxX_i(t)+\barxZ_j(t)
\end{equation}
where $\barxX_i(t)$ is the $F \times 1$ column vector representing
the $F$ symbol extension of the transmitted symbol $X_i$, i.e.,
$\barxX_i(t)= [ X_i(1,t)~ X_i(2,t)~ \cdots ~X_i(F,t)]^T$. Similarly,
$\barxY_j(t)$ and $\barxZ_j(t)$ represent the symbol extension of
$Y_j$ and $Z_j$, respectively. $\barxH_{ji}$ is the $F \times F$
diagonal matrix representing the extension of the channel, i.e.,
\begin{eqnarray*}
\barxH_{ji}= \left[\begin{array}{cccc}H_{ji}(1)& 0 & \cdots & 0\\
0 & H_{ji}(2) & \cdots & 0\\ \vdots & \vdots & \ddots & \vdots\\ 0 &
0 & \cdots & H_{ji}(F) \end{array}\right]
\end{eqnarray*}
Transmitter $i$ has message $W_{ji} \in \{1,2,\ldots,M_{ji}\}$ for
Receiver $j$, for each $i \in \{1,2,\ldots,M\}$, $j \in
\{1,2,\ldots,N\}$, resulting in a total of $MN$ independent
messages. An $(M_{11},\ldots,M_{NM},n,F,P_e)$ code
for the $X$ channel consists of the following:\\
\begin{itemize}
\item $MN$ independent message sets:
$\mathcal{W}_{ji}=\{1,2,\ldots,M_{ji}\}$
\item $M$ encoding functions,  $f_i$:
$\mathcal{W}_{1i} \times \mathcal{W}_{2i} \times \cdots \times
\mathcal{W}_{Ni} \to \barxX^n_i$, where $\barxX^n_i=[\barxX_i(1)~
\barxX_i(2)~ \cdots ~\barxX_i(n)]$, which map the message tuple
$(w_{1i}, w_{2i}, \cdots, w_{Ni}) \in \mathcal{W}_{1i} \times
\mathcal{W}_{2i} \times \cdots \times \mathcal{W}_{Ni}$ to
transmitted symbols. Each transmitter has a power constraint, i.e.
\begin{equation*}
\frac{1}{nF}\sum_{f=1}^F\sum_{t=1}^n |X_i(t,f)|^2 \leq P, ~i\in
\{1,2,\ldots,M\}.
\end{equation*}

\item $N$ decoding functions, $g_j$: $\barxY_j^n \to \mathcal{W}_{j1} \times
\mathcal{W}_{j2} \times \cdots \times \mathcal{W}_{jM}$, where
$\barxY_j^n=[\barxY_j(1)~\barxY_j(2)~ \cdots ~\barxY_j(n)]$, which
map the received sequence $\barxY_j^n$ to the decoded message tuple
$(\hat{w}_{j1}, \hat{w}_{j2}, \cdots ,\hat{w}_{jM}) \in
{\mathcal{W}_{j1} \times \mathcal{W}_{j2} \times \cdots \times
\mathcal{W}_{jM}}$.
\end{itemize}
The maximal average probability of error $P_e$ for an
$(M_{11},\ldots, M_{NM}, n, F, P_e)$ code is defined as
\begin{equation*}
P_e \triangleq \text{max}\{P_{e,11},P_{e,21},\cdots,P_{e,NM}\}
\end{equation*}
where
\begin{equation*}
P_{e,ji}=\frac{1}{M_{ji}}\sum_{w_{ji} \in
\mathcal{W}_{ji}}P\{g(\barxY_j^n)\neq w_{ji}|w_{ji} ~\text{sent}\}
\end{equation*}
We use the equivocation rate $\frac{1}{nF}H(W|\barxY^n_j)$ as the
secrecy measure.

A rate tuple $(R_{11}, R_{21}, \ldots, R_{NM})$ is said to be
achievable for the $M \times N$ user $X$ network with confidential
messages if for any $\epsilon > 0$, there exists an
$(M_{11},\ldots,M_{NM},n,F,P_e)$ code such that
\begin{equation*}
\frac{1}{nF}\log_2(M_{ji}) \geq R_{ji}
\end{equation*}
and the reliability requirement
\begin{equation*}
P_e \leq \epsilon
\end{equation*}
and the security constraints which will be defined shortly are
satisfied. The secure degrees of freedom tuple
$(\eta_{11},\ldots,\eta_{NM})$ is achievable if the rate tuple
$(R_{11},\ldots,R_{NM})$ is achievable and
\begin{eqnarray*}
&&\eta_{ji}=\lim_{P \rightarrow \infty} \frac{R_{ji}(P)}{\log(P)} ~~
 \forall (j,i)\in \mathcal{J} \times
\mathcal{I}, \\&&
\mathcal{I}=\{1,2,\ldots,M\},~\mathcal{J}=\{1,2,\ldots,N\}
\end{eqnarray*}

\subsection{Secrecy Constraints}
We will define four different secrecy constraints as follows:
\subsubsection{Secrecy Constraint 1}
The secrecy constraint is defined as
\begin{equation*}
\frac{1}{nF}H(\mathbf{W}_{(\mathcal{J}-j) \times
\mathcal{I}}|\barxY^n_j) \geq \sum_{(r,i)\in (\mathcal{J}-j) \times
\mathcal{I}}R_{ri}-\epsilon
\end{equation*}
where
\begin{equation*}
\mathbf{W}_{(\mathcal{J}-j) \times \mathcal{I}} \triangleq \{W_{ri}:
\forall (r,i) \in (\mathcal{J}-j) \times \mathcal{I} \}
\end{equation*}
This ensures the perfect secrecy of the set of all unintended
messages at each receiver. Moreover, it can be shown that perfect
secrecy for a set of messages guarantees perfect secrecy for any
subset of that message set, i.e.,
\begin{equation}
\frac{1}{nF}H(\mathbf{W}_S|\barxY^n_j) \geq \sum_{(r,i)\in
S}R_{ri}-\epsilon~~\forall S \subseteq (\mathcal{J}-j) \times
\mathcal{I} \label{collreq}
\end{equation}
To see this, consider
\begin{eqnarray}
H(\mathbf{W}_{(\mathcal{J}-j) \times \mathcal{I}}|\barxY^n_j) &=&
H(\mathbf{W}_S|\barxY^n_j)+ H(\mathbf{W}_{S^c}|\mathbf{W}_S,\barxY^n_j)\notag\\ \label{chainr}\\
&\leq&
H(\mathbf{W}_S|\barxY^n_j)+H(\mathbf{W}_{S^c})\label{condition}
\end{eqnarray}
where $S^c$ denotes the complement of $S$ and \eqref{chainr} follows
from the chain rule and \eqref{condition} follows from the fact that
conditioning reduces the entropy. If the message set satisfies the
secrecy constraint, i.e.,
\begin{equation*}
H(\mathbf{W}_{(\mathcal{J}-j) \times \mathcal{I}}|\barxY^n_j)\geq
H(\mathbf{W}_S)+H(\mathbf{W}_{S^c})-\epsilon
\end{equation*}
then from \eqref{condition} we have
\begin{eqnarray*}
&H(\mathbf{W}_S|\barxY^n_j)+H(\mathbf{W}_{S^c}) \geq H(\mathbf{W}_S)+H(\mathbf{W}_{S^c})-\epsilon\\
&\Rightarrow H(\mathbf{W}_S|\barxY^n_j)\geq H(\mathbf{W}_S)-\epsilon
\end{eqnarray*}
Thus, the confidentiality of the subset $\mathbf{W}_S$ is preserved.
\subsubsection{Secrecy Constraint 2}
Instead of ensuring the confidentiality of the set of unintended
messages of all transmitters, we only secure the set of unintended
messages from any $M-1$ transmitters. Secrecy constraint 2 is
defined as
\begin{equation*}
\frac{1}{nF}H(\mathbf{W}_{(\Jj-j) \times (\Ii-l)}|\barxY^n_j) \geq
\sum_{(r,i)\in (\mathcal{J}-j) \times
(\Ii-l)}R_{ri}-\epsilon~\forall l \in \Ii
\end{equation*}
where
\begin{equation*}
\mathbf{W}_{(\Jj-j) \times (\Ii-l)} = \{W_{ri}:\forall (r,i) \in
(\Jj-j) \times (\Ii-l)\}
\end{equation*}
Again, the perfect secrecy of a message set guarantees perfect
secrecy for any subset of that message set, i.e.,
\begin{eqnarray*}
\frac{1}{nF}H(\mathbf{W}_{\Ss \times \Uu}|\barxY^n_j) \geq
\sum_{(r,i)\in \mathcal{S}_J \times
\mathcal{S}_I}R_{ri}-\epsilon,\\\forall \mathcal{S}_J \subseteq
\mathcal{J}-j, \forall \mathcal{S}_I \subseteq \mathcal{I}-l
\end{eqnarray*}
Note that satisfying secrecy constraint 1 ensures satisfying secrecy
constraint 2.
\subsubsection{Secrecy Constraint 3}
Let us define $\mathcal{S}_I \subset \mathcal{I}$ to be the set of
transmitters that are compromised, i.e., the messages from the
compromised transmitter are revealed to the unintended receivers and
$\mathcal{S}^c_I$ to be the set of the remaining transmitters. We
define secrecy constraint 3 as
\begin{eqnarray*}
\frac{1}{nF}H(\mathbf{W}_{(\Jj-j) \times
\Uu^c}|\barxY^n_j,\xW_{(\mathcal{J}-j) \times \mathcal{S}_I}) \geq
\sum_{(r,i)\in (\mathcal{J}-j) \times
\mathcal{S}_I^c}R_{ri}-\epsilon \\ \forall \mathcal{S}_I \subset
\mathcal{I}
\end{eqnarray*}
This constraint ensures that secrecy of any subset of transmitters
even if all other transmitters are compromised. Also, this secrecy
constraint guarantees that
\begin{eqnarray*}
\frac{1}{nF}H(\mathbf{W}_{\Ss \times
\Uu^c}|\barxY^n_j,\xW_{(\mathcal{J}-j) \times \mathcal{S}_I}) \geq
\sum_{(r,i)\in \mathcal{S}_J \times
\mathcal{S}_I^c}R_{ri}-\epsilon\\
\forall \mathcal{S}_I \subset \mathcal{I},\forall \mathcal{S}_J
\subseteq \mathcal{J}-j
\end{eqnarray*}

\subsubsection{Secrecy Constraint 4}
Even if any subset of transmitters $\mathcal{S}_I \subset
\mathcal{I}$ is compromised, we require secrecy of the set of
messages from $\mathcal{S}_I^c-l$ transmitters for any $l \in
\mathcal{S}_I^c$. We define secrecy constraint 4 as
\begin{eqnarray*}
&&\frac{1}{nF}H(\mathbf{W}_{(\Jj-j) \times
(\Uu^c-l)}|\barxY^n_j,\xW_{(\mathcal{J}-j) \times \mathcal{S}_I})
\\ &\geq& \sum_{(r,i)\in (\mathcal{J}-j) \times
(\Uu^c-l)}R_{ri}-\epsilon ~~~~\forall \mathcal{S}_I \subset
\mathcal{I},~\forall l \in \Uu^c
\end{eqnarray*}

\section{The $M \times N$ user $X$ Network with confidential messages}
In this section, we consider the achievable secure degrees of
freedom of the $M \times N$ user $X$ channel under different secrecy
constraints. In order to satisfy the secrecy constraints, we use the
random binning coding scheme to generate the codebook. This is a
natural extension of the coding scheme used in \cite{Ruoheng:dmib}
to achieve the inner bound of the capacity region of the two user
discrete memoryless interference channel with confidential messages.
To maximize the achievable degrees of freedom, we adopt the
interference alignment scheme used in \cite{Cadambe_Jafar_X}. The
main results of this section are presented in the following
theorems:
\begin{theorem}
For the $M \times N$ user $X$ network with single antenna nodes,
$\frac{M-1}{M(M+N-1)}$ secure degrees of freedom can be achieved for
each message $W_{ji}$, $\forall j \in \{1,\ldots, N\}, \forall i \in
\{1,\ldots, M\}$ and hence a total of $\frac{N(M-1)}{M+N-1}$ secure
degrees of freedom can be achieved under secrecy constraint
1.\label{thm:col}
\end{theorem}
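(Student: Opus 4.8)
The plan is to graft a random-binning wiretap code onto the interference-alignment construction of \cite{Cadambe_Jafar_X}, choosing the binning rate so that the extra codebook randomness exactly fills the signalling dimensions an eavesdropping receiver has available for the unintended messages; this is the natural $X$-network analogue of the scheme of \cite{onur}. By \eqref{collreq} it suffices to verify the secrecy constraint for the full unintended set $\mathbf{W}_{(\Jj-j)\times\Ii}$ at every Receiver $j$. First I would fix a large symbol-extension length $F$ (over the frequency dimension) and invoke \cite{Cadambe_Jafar_X} to obtain, for each message $W_{ji}$, a beamforming matrix $\barxV_{ji}$ with $D_{ji}$ columns, $D_{ji}/F\to\frac{1}{M+N-1}$, such that at every Receiver $j$ (i) the $M$ desired images $\barxH_{ji}\barxV_{ji}$ are jointly resolvable once the interference is zero-forced, and (ii) all $M(N-1)$ unintended images collapse onto a common subspace $\mathcal{S}_j$ of dimension $L_j=F-\sum_i D_{ji}\approx\frac{(N-1)F}{M+N-1}$, with the aligned signals spanning $\mathcal{S}_j$ as generically as the channel permits. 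For each $W_{ji}$ I would then generate $2^{nF(R_{ji}+R'_{ji})}$ i.i.d.\ Gaussian codewords of block length $n$ (each symbol a vector beamformed by $\barxV_{ji}$), partition them into $2^{nFR_{ji}}$ bins of $2^{nFR'_{ji}}$ codewords, and to convey $w_{ji}$ have Transmitter $i$ transmit a codeword drawn uniformly at random from bin $w_{ji}$. I would take $R_{ji}+R'_{ji}=\big(\tfrac{1}{M+N-1}-\delta\big)\log P$ and $R'_{ji}=\big(\tfrac{1}{M(M+N-1)}-\delta'\big)\log P$ with $\delta,\delta'>0$ small, so that $R_{ji}=\big(\tfrac{M-1}{M(M+N-1)}-\delta+\delta'\big)\log P$.

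Reliability is the routine part: Receiver $j$ projects $\barxY_j^n$ onto the complement of $\mathcal{S}_j$, annihilating all interference, and jointly decodes its $M$ desired (codeword,\,bin) pairs over the resulting $\big(\sum_i D_{ji}\big)$-dimensional generic Gaussian multiple-access channel; since the desired sum-rate $M\big(\tfrac{1}{M+N-1}-\delta\big)\log P$ lies strictly inside the high-SNR capacity region of that channel, $P_e\to0$ as $n\to\infty$.

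The secrecy step carries the real content. Fix Receiver $j$, write $\mathbf{W}=\mathbf{W}_{(\Jj-j)\times\Ii}$, let $\mathbf{C}$ collect the codewords transmitted for $\mathbf{W}$, and let $\tilde{\mathbf{C}}$ collect the codewords for Receiver $j$'s own messages. Genie-aiding Receiver $j$ with $\tilde{\mathbf{C}}$ lets it subtract its desired signal, so $\barxY_j^n$ is reduced to the interference-plus-noise vector $\barxY_j^{\prime n}$ supported on $\mathcal{S}_j$; since $(\mathbf{W},\barxY_j^{\prime n})$ is independent of $\tilde{\mathbf{C}}$, $I(\mathbf{W};\barxY_j^n)\le I(\mathbf{W};\barxY_j^{\prime n})=I(\mathbf{W},\mathbf{C};\barxY_j^{\prime n})-I(\mathbf{C};\barxY_j^{\prime n}\,|\,\mathbf{W})$. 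The first term is at most the capacity of the $L_j$-dimensional MIMO channel $\mathcal{S}_j$, i.e.\ $nL_j\log P+O(n)$. For the second I would run a Fano argument: conditioned on the bins $\mathbf{W}$, the only residual uncertainty in $\mathbf{C}$ is the binning randomness, of total rate $\sum_{(r,i)}R'_{ri}=\big(\tfrac{N-1}{M+N-1}-M(N-1)\delta'\big)\log P$, which for $F$ large sits strictly below $\tfrac{L_j}{F}\log P$, the sum-DoF of the generic aligned-interference multiple-access channel inside $\mathcal{S}_j$; hence Receiver $j$ recovers $\mathbf{C}$ from $\barxY_j^{\prime n}$ with vanishing error and $I(\mathbf{C};\barxY_j^{\prime n}\,|\,\mathbf{W})\ge H(\mathbf{C}\,|\,\mathbf{W})-nF\epsilon_n$ with $\epsilon_n\to0$. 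Since $H(\mathbf{C}\,|\,\mathbf{W})=\sum_{(r,i)}nFR'_{ri}=nL_j\log P-nFM(N-1)\delta'\log P$ up to lower-order terms, the two $\log P$-order contributions cancel and $\tfrac{1}{nF}I(\mathbf{W};\barxY_j^n)\le M(N-1)\delta'\log P+\epsilon_n+O\big(\tfrac{\log P}{F}\big)$. Choosing $\delta'$ small, then $F$ and $n$ large, bounds this by $\epsilon$; combined with $\tfrac{1}{nF}H(\mathbf{W})=\sum_{(r,i)}R_{ri}$ this is secrecy constraint 1, and \eqref{collreq} extends it to every subset. Sending $\delta,\delta'\to0$ as $P\to\infty$ gives $\eta_{ji}=\tfrac{M-1}{M(M+N-1)}$ for every $(j,i)$, hence $\sum_{j,i}\eta_{ji}=\tfrac{N(M-1)}{M+N-1}$.

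The main obstacle I expect is the genericity inside the Fano step: one must know that after removal of the desired signal the aligned interference occupies a subspace of dimension exactly $L_j=F-\sum_i D_{ji}$ and that the $M(N-1)$ binning-codeword streams stay jointly decodable up to the full $L_j$ degrees of freedom of that subspace, i.e.\ that the products of diagonal channel matrices defining the \cite{Cadambe_Jafar_X} beamformers keep the relevant submatrices full rank for almost every channel realization. Establishing this, and carefully matching the vanishing parameters $\delta,\delta',\epsilon_n$ against the integer rounding hidden in $D_{ji}/F$, is where the ``alignment is tight'' structure of the $X$ network is used: it is precisely this tightness that lets the binning randomness be made to coincide, dimension for dimension, with what each unintended receiver can observe, so that the only degrees of freedom left over for the messages themselves are the claimed $\tfrac{M-1}{M}$ fraction.
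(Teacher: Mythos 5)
Your overall architecture --- the Cadambe--Jafar alignment, a random-binning wiretap code on top of each stream, and an equivocation bound of the form $I(\mathbf{W};\barxY_j^n)\le I(\mathbf{C};\barxY_j^{\prime n})-H(\mathbf{C}\,|\,\mathbf{W})+nF\epsilon_n$ via Fano --- is exactly the paper's, but there is a genuine gap in the binning: you use a \emph{single} layer of bins with one global randomization rate $R'_{ji}$, and this cannot satisfy secrecy constraint 1 at all $N$ receivers simultaneously. The constraint must hold with an arbitrary $\epsilon>0$ at each fixed $P$, which forces the total randomization rate to be sandwiched between $\frac{1}{nF}I(\xx^n_{(\Jj-k)\times\Ii};\barxY_k^n|\xx^n_{k\times\Ii})-\epsilon$ (so that the leakage at Receiver $k$ vanishes) and $\frac{1}{nF}I(\xx^n_{(\Jj-k)\times\Ii};\barxY_k^n|\xx^n_{k\times\Ii})$ (so that Fano applies at Receiver $k$) for \emph{every} $k$. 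These $N$ secrecy-penalty mutual informations agree only to first order in $\log P$; their differences are $o(\log P)$ but, per symbol, $\Theta(1)$ in both $P$ and $F$ --- your claim that the discrepancy between $\frac{1}{F}I(\cdot)$ and $\frac{L_j}{F}\log P$ is $O(\log P/F)$ is incorrect, since the sub-leading term of a $\log\det$ mutual information over an $F$-extension grows linearly in $F$. Consequently no single choice of $\delta'$ works: small $\delta'$ breaks Fano at the receiver with the smallest penalty, while a $\delta'$ large enough for decodability everywhere leaves an irreducible equivocation gap of order $\max_k I_k-\min_k I_k$ at the receiver with the largest penalty, which cannot be driven below $\epsilon$ by growing $n$, $F$, or $m$.

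The paper's proof contains precisely the device you are missing: an $(N-1)$-fold \emph{nested} binning. The receivers are ordered (WLOG) by their penalties $I(\xX_{(\Jj-k)\times\Ii};\barxY_k|\xX_{k\times\Ii})$, the sub-bin rates $R^1_{ji},\ldots,R^{N-1}_{ji},R^{\dagger}_{ji}$ are the telescoping differences of these ordered penalties as in \eqref{rk}--\eqref{rdagger}, and in the equivocation analysis at Receiver $j$ one additionally conditions on the first $N-j$ layers of sub-bin indices $\mathcal{B}^j_{(\Jj-j)\times\Ii}$ (a valid lower bound since conditioning reduces entropy). The residual randomization then equals Receiver $j$'s own penalty exactly, so the Fano step and the cancellation in \eqref{b1}--\eqref{b2} both go through with equality at every receiver. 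Your single-layer scheme is the $N=1$-layer degenerate case and does not close for $N\ge 2$. A secondary, smaller omission: your Fano step checks only the sum-rate condition for decoding $\mathbf{C}$ inside $\mathcal{S}_j$, whereas the aligned interference makes this a multiple-access problem whose subset rate constraints must also be verified (subsets of aligned streams can be confined to low-dimensional subspaces); this part is repairable but should be stated.
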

\begin{proof}
We provide a detailed proof in the Appendix. A sketch of the proof
is provided here. Consider the $F$ symbol extension channel where
$F=N(m+1)^{\Gamma}+(M-1)m^{\Gamma}, \forall m \in \mathbb{N}$ and
$\Gamma =(N-1)(M-1)$. Over the $F$ symbol extension channel, message
$W_{j1}$ is encoded at Transmitter 1 into $m_1=(m+1)^\Gamma$
independent streams $\xX_{j1}(t)$ which is an $(m+1)^\Gamma \times
1$ vector and message $W_{ji}, i \neq 1$ is encoded at Transmitter
$i$ into $m_i=m^\Gamma$ independent streams $\xX_{ji}(t)$ which is
an $m^\Gamma \times 1$ vector based on random binning coding scheme.
Note that such coding scheme introduces randomness to ensure the
secrecy. Then transmitter $i$ employs the interference alignment
scheme mapping $\xX_{ji}(t)$ to $\mathbf{V}_{ji}(t)\xX_{ji}(t)$
where $\mathbf{V}_{ji}$ is the $F \times m_i$ matrix. At last,
Transmitter $i$ sends signal
$\barxX_i(t)=\sum_{j=1}^N\mathbf{V}_{ji}(t)\xX_{ji}(t)$ into the
channel. Note that the precoding matrices $\mathbf{V}_{ji}(t)$ are
chosen as given in \cite{Cadambe_Jafar_X} so that at each receiver,
the desired signal vectors span a signal space which is disjoint
with the space spanned by the interference vectors. Therefore, each
receiver can decode its desired data streams by zero forcing the
interference. Note that at Receiver $j$, the signal vectors
associated with $M$ desired messages $W_{ji}, \forall i=1,\ldots,M$
span a $(m+1)^{\Gamma}+(M-1)m^{\Gamma}$ dimensional subspace in the
$F=N(m+1)^{\Gamma}+(M-1)m^{\Gamma}$ dimensional signal space. Thus,
to get an interference-free signal subspace, the dimension of the
subspace spanned by all interference vectors has to be less than or
equal to $(N-1)(m+1)^{\Gamma}$. Notice that the interference vectors
from Transmitter 1 span a $(N-1)(m+1)^{\Gamma}$ dimensional
subspace. Therefore, we can align the interference vectors from all
other transmitters within this subspace so that each receiver can
decode its desired data streams by zero forcing the interference in
this subspace. Next, it can be shown that the following secrecy rate
is achievable:
\begin{eqnarray}
R_{ji}&=&\frac{1}{F}I(\xX_{ji};\barxY_j) \notag
\\&-&\frac{1}{F}\frac{1}{M(N-1)}\max_{k \in \Jj} I(\xX_{(\Jj-k)\times
\Ii};\barxY_k|\xX_{k\times \Ii})\notag\\
&&~~~~~~~~~~~~~~~~~~~~~~~~~~~~~~~~~\forall (j,i) \in \Jj \times \Ii
\label{achieve1}
\end{eqnarray}
From \cite{Cadambe_Jafar_X}, we have
\begin{equation*}
I(\xX_{ji};\barxY_j)=(m+1)^{\Gamma}\log(P)+o(\log(P))~~i=1
\end{equation*}
and
\begin{equation*}
I(\xX_{ji};\barxY_j)= m^{\Gamma}\log(P)+o(\log(P))~~i=2,\ldots,M
\end{equation*}
Next, consider the term $I(\xX_{(\Jj-k)\times
\Ii};\barxY_k|\xX_{k\times \Ii})$ which denotes the secrecy penalty.
Notice that all the interference vectors are aligned within the
space spanned by $(N-1)(m+1)^{\Gamma}$ interference vectors from
Transmitter 1. Therefore, the secrecy penalty is
\begin{eqnarray*}
&~&I(\xX_{(\Jj-k)\times \Ii};\barxY_k|\xX_{k\times \Ii})\\
&=&(N-1)(m+1)^{\Gamma}\log(P)+o(\log(P))~~ \forall k \in \Jj
\end{eqnarray*}
Hence, \eqref{achieve1} can be written as
\begin{equation*}
R_{ji}=
\frac{1}{F}(m+1)^{\Gamma}(1-\frac{1}{M})\log(P)+o(\log(P))~~i=1
\end{equation*}
and
\begin{equation*}
R_{ji}=
\frac{1}{F}(m^{\Gamma}-\frac{(m+1)^{\Gamma}}{M})\log(P)+o(\log(P))~~i=2,\ldots,M
\end{equation*}
As $m \to \infty$, we have
\begin{equation*}
R_{ji}=\frac{M-1}{M(M+N-1)}\log(P)+o(\log(P))~~~ \forall (j,i)\in
\Jj \times \Ii
\end{equation*}
As a result, each message can achieve
$\eta_{ji}=\frac{M-1}{M(M+N-1)}$ secure degrees of freedom for a
total of $\frac{(M-1)N}{M+N-1}$ secure degrees of freedom.
\end{proof}
Note that in \cite{Cadambe_Jafar_X}, it is shown that
$\frac{1}{M+N+1}$ degrees of freedom can be achieved for each
message $W_{ji}$ without secrecy constraint. Theorem $\ref{thm:col}$
shows that only a fraction $\frac{1}{M}$ degrees of freedom is lost
under secrecy constraint 1. However, it is interesting that if we
relax the secrecy constraint a little, i.e., only ensure the
confidentiality of the set of messages from any $M-1$ out of $M$
transmitters at each receiver, there will be no loss of degrees of
freedom. We present the result in the following theorem:
\begin{theorem}
For the $M \times N$ user $X$ network with single antenna nodes,
each message can achieve $\frac{1}{M+N-1}$ secure degrees of freedom
for a total of $\frac{MN}{M+N-1}$ secure degrees of freedom under
secrecy constraint 2.
\end{theorem}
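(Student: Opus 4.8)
The plan is to follow the same template as Theorem \ref{thm:col}, but exploit the fact that under secrecy constraint 2 one transmitter---say Transmitter $1$---need not protect its messages, so its signal may be used more aggressively. First I would set up the $F$ symbol extension channel with the same interference alignment structure from \cite{Cadambe_Jafar_X}, choosing $F = N(m+1)^\Gamma + (M-1)m^\Gamma$ with $\Gamma = (N-1)(M-1)$, so that each message $W_{ji}$ is carried on $m_1 = (m+1)^\Gamma$ streams when $i=1$ and $m_i = m^\Gamma$ streams when $i \ne 1$, and the desired signal spaces at every receiver are disjoint from the aligned interference. The key point is that since messages from Transmitter $1$ need not be secured, we do not pay any secrecy penalty for the interference caused by $\xX_{j1}$; only the interference from Transmitters $2,\ldots,M$ contributes to the equivocation we must subtract.

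Next I would invoke the random binning argument (as in Theorem \ref{thm:col}, borrowing from \cite{Ruoheng:dmib}) to show the achievability of a secrecy rate of the form
\begin{equation*}
R_{ji} = \frac{1}{F}I(\xX_{ji};\barxY_j) - \frac{1}{F}\frac{1}{M(N-1)}\max_{k\in\Jj}I(\xX_{(\Jj-k)\times(\Ii-1)};\barxY_k\mid \xX_{k\times\Ii},\xX_{(\Jj-k)\times\{1\}}).
\end{equation*}
The first term is unchanged from the no-secrecy analysis: $(m+1)^\Gamma\log P + o(\log P)$ for $i=1$ and $m^\Gamma\log P + o(\log P)$ for $i\ne 1$. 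For the penalty term, the interference vectors from Transmitters $2,\ldots,M$ at any receiver $k$, after alignment, still lie inside the $(N-1)(m+1)^\Gamma$-dimensional interference subspace spanned by Transmitter $1$'s interference vectors, so the conditional mutual information is at most $(N-1)(m+1)^\Gamma\log P + o(\log P)$. Actually, because Transmitter $1$'s signal is being conditioned on, one must check more carefully how much of that subspace the remaining $M-1$ transmitters' interference occupies; since the alignment packs their $(N-1)(M-1)m^\Gamma$ interference streams into the same $(N-1)(m+1)^\Gamma$ dimensions, the penalty is bounded by $(N-1)(m+1)^\Gamma\log P + o(\log P)$ as before. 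Dividing by $M(N-1)$, the per-message penalty is $\frac{(m+1)^\Gamma}{M}\log P + o(\log P)$; wait---here is where the gain appears: because there are now only $M-1$ "secured" transmitters the averaging factor should be recomputed, yielding a per-message penalty of $\frac{(m+1)^\Gamma}{M}\cdot\frac{M-1}{M-1}$... the correct bookkeeping is that the total secrecy penalty $(N-1)(m+1)^\Gamma\log P$ is shared among the $(M-1)N$ secured messages seen at the worst receiver, giving a penalty per secured message that vanishes relative to the useful rate as $m\to\infty$.

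Letting $m\to\infty$, the ratio $\frac{(m+1)^\Gamma}{F}\to\frac{1}{M+N-1}$ and $\frac{m^\Gamma}{F}\to\frac{1}{M+N-1}$, while the penalty term, being of the same polynomial order but scaled by a factor strictly smaller than one that is then driven to zero by the combinatorial averaging, contributes nothing in the limit; hence $R_{ji} = \frac{1}{M+N-1}\log P + o(\log P)$ for every $(j,i)$, which is exactly the no-secrecy degrees of freedom. Summing over all $MN$ messages gives $\frac{MN}{M+N-1}$. Finally I would verify that secrecy constraint 2 itself is met: the random binning rate chosen for Transmitters $2,\ldots,M$ is precisely calibrated so that, at each receiver $j$ and for each excluded transmitter $l$, the equivocation $\frac{1}{nF}H(\mathbf{W}_{(\Jj-j)\times(\Ii-l)}\mid\barxY^n_j)$ meets the required lower bound; this is the step that uses the conditional-typicality / Fano-type arguments of \cite{Ruoheng:dmib} adapted to the aligned signal structure. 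The main obstacle I anticipate is making the secrecy-penalty accounting rigorous when one transmitter's codewords are deterministic (not binned): one must confirm that conditioning on $\xX_{j1}$ in the equivocation expression does not force any of the genuinely secured interference out of the aligned subspace, so that the penalty stays at the pre-log order $(N-1)(m+1)^\Gamma$ and is asymptotically negligible per message. The detailed verification, paralleling the Appendix proof of Theorem \ref{thm:col}, is deferred.
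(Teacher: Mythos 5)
There is a genuine gap here, and it stems from a misreading of secrecy constraint 2 together with a missing key idea. Constraint 2 is not ``Transmitter $1$ need not be protected'': it requires that for \emph{every} $l\in\Ii$ the joint message set $\mathbf{W}_{(\Jj-j)\times(\Ii-l)}$ be secure, so every individual message is still secured and every transmitter still bins; only the joint secrecy of all $M$ transmitters together is relaxed. Accordingly the penalty that must be paid is $\max_{k\in\Jj,\,l\in\Ii} I(\xX_{(\Jj-k)\times(\Ii-l)};\barxY_k\mid\xX_{k\times\Ii})$, where the excluded transmitter $l$'s unintended signals are \emph{not} conditioned away. That is the crux: Transmitter $l$'s interference remains in $\barxY_k$ and, because the alignment packs everything into the $(N-1)(m+1)^{\Gamma}$-dimensional interference subspace, it jams all but $(N-1)\bigl((m+1)^{\Gamma}-m^{\Gamma}\bigr)$ of those dimensions (and for $l=1$ it jams all of them). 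The leakage is therefore $O\!\left(m^{\Gamma-1}\log P\right)$, which is lower order than $F=O(m^{\Gamma})$, and this is the only reason the penalty vanishes and the full $\frac{1}{M+N-1}$ per message survives.

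Your proposal instead conditions the penalty term on $\xX_{(\Jj-k)\times\{1\}}$, which strips out exactly the jamming signal that makes the argument work: once Transmitter $1$'s interference is subtracted, the secured transmitters' aligned streams are exposed and the conditional mutual information is genuinely of order $(N-1)(m+1)^{\Gamma}\log P$, i.e., the \emph{same} polynomial order as the useful rate. The subsequent claim that ``combinatorial averaging'' over the $(M-1)(N-1)$ secured messages drives this to zero is false: dividing a $\Theta(m^{\Gamma}\log P)$ penalty by a constant still leaves a per-message penalty of $\Theta(m^{\Gamma}\log P)$, comparable to $I(\xX_{ji};\barxY_j)$ itself, and would yield a strictly smaller DoF than even Theorem~\ref{thm:col}. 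You also never treat the cases $l\neq 1$, which under the ``$\forall l$'' quantifier are the ones that determine the max (for $l\neq 1$ the excluded transmitter occupies only $(N-1)m^{\Gamma}$ of the interference dimensions, leaving the nonzero but lower-order residual $(N-1)((m+1)^{\Gamma}-m^{\Gamma})$). To repair the proof, drop the conditioning on Transmitter $1$'s codewords, take the penalty as $\frac{1}{(M-1)(N-1)}\max_{k,l}I(\xX_{(\Jj-k)\times(\Ii-l)};\barxY_k\mid\xX_{k\times\Ii})$, and show it equals $(N-1)((m+1)^{\Gamma}-m^{\Gamma})\log P+o(\log P)$, which vanishes relative to $F$ as $m\to\infty$.
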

\begin{proof}
The proof is similar to the proof of Theorem 1. We only provide a
sketch of proof here. It can be shown that the following secrecy
rate is achievable:
\begin{eqnarray}
R_{ji} =
\frac{1}{F}I(\xX_{ji};\barxY_j)~~~~~~~~~~~~~~~~~~~~~~~~~~~~~~~~~~~~~~~~~~~~~\notag\\-
\frac{1}{F}\frac{1}{(M-1)(N-1)} \max_{k \in \Jj, l \in
\Ii}I(\xX_{(\Jj-k)\times(\Ii-l)};\barxY_k|\xX_{k \times \Ii}) \notag\\
\forall (j,i) \in \Jj \times \Ii ~\forall l \in \Ii \label{achinv}
\end{eqnarray}
where $F=N(m+1)^{\Gamma}+(M-1)m^{\Gamma}$ and $\Gamma=(M-1)(N-1)$.
Through interference alignment, it can be shown that
\begin{equation*}
I(\xX_{ji};\barxY_j)=\eta\log(P)+o(\log(P))
\end{equation*}
where $\eta=(m+1)^{\Gamma}$ when $i=1$ and $\eta=m^{\Gamma}$ when
$i=2,3,\ldots,M$. Then consider the secrecy penalty term
$I(\xX_{(\Jj-k)\times(\Ii-l)};\barxY_k|\xX_{k \times \Ii})$. At each
receiver, the interference vectors from Transmitter $2,3\ldots,M$
are aligned perfectly with the interference vectors from Transmitter
1, i.e. every interference signal vector from Transmitter
$2,3\ldots,M$ is aligned along the same dimension with one
interference signal vector from Transmitter 1. Note that there are
$(m+1)^{\Gamma}$ interference vectors for each message from
Transmitter 1, but there are only $m^{\Gamma}$ interference vectors
for each message from Transmitter $2,3\ldots,M$. If $l=1$,
$I(\xX_{(\Jj-k)\times(\Ii-1)};\barxY_k|\xX_{k \times \Ii})$ denotes
the mutual information between the channel output at Receiver $k$
and channel inputs from Transmitter $2,\ldots,M$. Since all vectors
from Transmitter $2,3\ldots,M$ are aligned perfectly with
interference vectors from Transmitter 1, it has zero degrees of
freedom, i.e., $I(\xX_{(\Jj-k)\times(\Ii-1)};\barxY_k|\xX_{k \times
\Ii})=o(\log(P))$. For $\forall l \neq 1$, the interference vectors
from Transmitter $l$ occupy a $(N-1)m^{\Gamma}$ dimensional
subspace. Therefore, the remaining transmitters can get a
$(N-1)((m+1)^{\Gamma}-m^{\Gamma})$ dimensional space without
interference vectors from Transmitter $l$. Therefore, we have
\begin{eqnarray*}
\max_{k \in \Jj,~l \in
\Ii}I(\xX_{(\Jj-k)\times(\Ii-l)};\barxY_k|\xX_{k \times
\Ii})~~~~~~~~~~\\=(N-1)((m+1)^{\Gamma}-m^{\Gamma})\log(P)+o(\log(P))\\ \forall (j,i) \in \Jj \times \Ii\\
\end{eqnarray*}
Thus, \eqref{achinv} can be written as
\begin{eqnarray*}
R_{ji} =
\frac{(M-1)\eta-((m+1)^{\Gamma}-m^{\Gamma})}{F(M-1)}\log(P)+o(\log(P))\\
\forall i=1,2,\ldots,M
\end{eqnarray*}
When $m \to \infty$, we have
\begin{equation*}
\eta_{ji}=\lim_{m \to
\infty}\frac{(M-1)\eta-((m+1)^{\Gamma}-m^{\Gamma})}{F(M-1)}=\frac{1}{M+N-1}
\end{equation*}
Therefore, each message can achieve $\frac{1}{M+N-1}$ secure degrees
of freedom for a total of $\frac{MN}{M+N-1}$ secure degrees of
freedom.
\end{proof}
Next, we consider the achievable secure degrees of freedom under the
more conservative secrecy constraints to ensure secrecy of any
subset of transmitters even if all other transmitters are
compromised. We present the result in the following theorem.
\begin{theorem}
For the $M \times N$ user $X$ network with single antenna nodes,
even if any subset of transmitters, $\mathcal{S} \subset
\{1,\ldots,M\}$ is compromised, the remaining $(M-|\mathcal{S}|)
\times N$ users can still achieve a total of
$\frac{N(M-|\mathcal{S}|-1)}{M+N-1}$ secure degrees of freedom under
secrecy constraint 3 and $\frac{N(M-|\mathcal{S}|)}{M+N-1}$ secure
degrees of freedom under secrecy constraint 4, as long as
$|\mathcal{S}|\leq M-2$ .
\end{theorem}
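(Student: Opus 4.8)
The plan is to reduce both parts of the theorem to Theorems~\ref{thm:col} and~2 by treating the $|\mathcal{S}|$ compromised transmitters as genie-supplied side information that can be subtracted at each eavesdropping receiver. Since $|\mathcal{S}|\leq M-2$ there are at least two uncompromised transmitters, so, after relabeling, I would assume Transmitter~$1\notin\mathcal{S}$ and keep the construction of the previous two proofs unchanged: work over the $F=N(m+1)^{\Gamma}+(M-1)m^{\Gamma}$ symbol extension with $\Gamma=(M-1)(N-1)$, encode each message of Transmitter~$1$ into $(m+1)^{\Gamma}$ random-binning streams and each message of every other transmitter into $m^{\Gamma}$ such streams, and use the precoders $\mathbf{V}_{ji}$ of \cite{Cadambe_Jafar_X} so that at every receiver the desired streams are interference free while every unintended stream of every transmitter lies in the $(N-1)(m+1)^{\Gamma}$-dimensional subspace spanned by the interference from Transmitter~$1$. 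The compromised transmitters transmit their own messages with the same precoders; because the genie reveals $\mathbf{W}_{(\mathcal{J}-j)\times\mathcal{S}}$ to Receiver~$j$, their contribution to $\barxY_j$ becomes known side information at each eavesdropping receiver, which is exactly what allows the worst-case constraints 3 and 4 to be met with positive rate.

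Next I would repeat the random-binning equivocation analysis of the Appendix (the proof of Theorem~\ref{thm:col}), conditioning every equivocation term additionally on $\mathbf{W}_{(\mathcal{J}-j)\times\mathcal{S}}$ and observing that now only the $M-|\mathcal{S}|$ uncompromised transmitters need to inject randomness to protect their $N(M-|\mathcal{S}|)$ messages. This should yield the achievable secure rates
\begin{align*}
R_{ji}={}&\frac{1}{F}I(\xX_{ji};\barxY_j)\\
&-\frac{1}{F}\frac{1}{(M-|\mathcal{S}|)(N-1)}\max_{k\in\Jj}I(\xX_{(\Jj-k)\times\mathcal{S}^c};\barxY_k|\xX_{k\times\Ii},\xX_{(\Jj-k)\times\mathcal{S}})
\end{align*}
under secrecy constraint~3, and
\begin{align*}
R_{ji}={}&\frac{1}{F}I(\xX_{ji};\barxY_j)\\
&-\frac{1}{F}\frac{1}{(M-|\mathcal{S}|-1)(N-1)}\max_{k\in\Jj,\,l\in\mathcal{S}^c}I(\xX_{(\Jj-k)\times(\mathcal{S}^c-l)};\barxY_k|\xX_{k\times\Ii},\xX_{(\Jj-k)\times\mathcal{S}})
\end{align*}
under secrecy constraint~4, for every uncompromised pair $(j,i)\in\Jj\times\mathcal{S}^c$. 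These are the direct analogues of \eqref{achieve1} and \eqref{achinv}; the only structural change is that conditioning on $\xX_{(\Jj-k)\times\mathcal{S}}$ removes the compromised transmitters' interference from the eavesdropper's observation, so the secrecy penalty measures only the alignment dimension that the uncompromised transmitters occupy at Receiver~$k$.

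I would then count dimensions. From \cite{Cadambe_Jafar_X}, $I(\xX_{ji};\barxY_j)=\eta\log(P)+o(\log(P))$ with $\eta=(m+1)^{\Gamma}$ for $i=1$ and $\eta=m^{\Gamma}$ otherwise. For constraint~3 the interference from the uncompromised transmitters still fills the whole $(N-1)(m+1)^{\Gamma}$-dimensional subspace of Transmitter~$1$, so the penalty is $(N-1)(m+1)^{\Gamma}\log(P)+o(\log(P))$; dividing by $(M-|\mathcal{S}|)(N-1)$ and letting $m\to\infty$ gives $\eta_{ji}=\frac{M-|\mathcal{S}|-1}{(M-|\mathcal{S}|)(M+N-1)}$ for each of the $N(M-|\mathcal{S}|)$ secured messages, hence a total of $\frac{N(M-|\mathcal{S}|-1)}{M+N-1}$. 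For constraint~4, exactly as in the proof of Theorem~2, the maximizing $l$ is any $l\neq1$ (the choice $l=1$ contributes only $o(\log(P))$, since Transmitter~$1$'s interference by itself spans the alignment subspace), and there the penalty equals $(N-1)((m+1)^{\Gamma}-m^{\Gamma})\log(P)+o(\log(P))$; since $(m+1)^{\Gamma}-m^{\Gamma}=o(m^{\Gamma})$ while $F\sim(M+N-1)m^{\Gamma}$, this penalty is negligible in the degrees-of-freedom sense and $\eta_{ji}=\frac{1}{M+N-1}$ for each of the $N(M-|\mathcal{S}|)$ messages, a total of $\frac{N(M-|\mathcal{S}|)}{M+N-1}$.

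I expect the main obstacle to be the equivocation computation underlying the two displayed rate expressions: one must verify that random binning at the uncompromised transmitters, with binning rates matched to the genie-conditioned mutual-information penalties, simultaneously meets every subset instance of secrecy constraints~3 and~4, i.e.\ over all $\mathcal{S}_J\subseteq\Jj-j$ and, for constraint~4, all $l\in\mathcal{S}^c$. This is a bookkeeping-heavy but otherwise routine extension of the Appendix proof of Theorem~\ref{thm:col}; by contrast, once the compromised transmitters' signals have been conditioned away the dimension counting for the penalty terms is immediate from the alignment structure of \cite{Cadambe_Jafar_X}, and the limits $m\to\infty$ are mechanical.
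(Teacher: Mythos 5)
Your construction, rate expressions, dimension counts, and limits essentially coincide with the paper's proof of Theorem~3, and the final degrees-of-freedom numbers are right. There is, however, one genuine gap, located exactly at the step you flag as the crux. You justify the reduction by saying that ``because the genie reveals $\mathbf{W}_{(\mathcal{J}-j)\times\mathcal{S}}$ to Receiver~$j$, their contribution to $\barxY_j$ becomes known side information.'' Under random binning this is false: the codeword $\xX^n_{ri}$ is a \emph{randomized} function of the message $W_{ri}$ (the transmitter draws the sub-bin and codeword indices uniformly at random), so knowing $\mathbf{W}_{(\mathcal{J}-j)\times\mathcal{S}}$ does not determine $\xX^n_{(\mathcal{J}-j)\times\mathcal{S}}$, and the compromised transmitters' contribution to $\barxY^n_j$ cannot be subtracted. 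Consequently, if you literally ``condition every equivocation term additionally on $\mathbf{W}_{(\mathcal{J}-j)\times\mathcal{S}}$'' as you propose, you cannot arrive at penalty terms conditioned on $\xX_{(\Jj-k)\times\mathcal{S}}$ as in your displayed rate formulas: the two halves of your argument are conditioned on different objects. The paper resolves this by proving a \emph{strictly stronger} statement --- it imposes the equivocation bound conditioned on the compromised transmitters' codewords and uses $H(\cdot\,|\,\barxY^n_j,\xW_{(\mathcal{J}-j)\times\mathcal{S}})\geq H(\cdot\,|\,\barxY^n_j,\xW_{(\mathcal{J}-j)\times\mathcal{S}},\xX^n_{(\mathcal{J}-j)\times\mathcal{S}})=H(\cdot\,|\,\barxY^n_j,\xX^n_{(\mathcal{J}-j)\times\mathcal{S}})$ --- and only then runs the binning analysis with the codewords conditioned away. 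This step is essential (and is available precisely because each message is encoded into its own codeword); you need to add it explicitly.

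A secondary issue: secrecy constraints 3 and 4 are quantified over \emph{all} $\mathcal{S}\subset\mathcal{I}$ simultaneously, and the code treats Transmitter~1 asymmetrically ($(m+1)^{\Gamma}$ streams versus $m^{\Gamma}$), so you are not free to relabel so that Transmitter~1 lies outside $\mathcal{S}$; the paper instead keeps a $\max_{\mathcal{S}\subset\Ii}$ (and $\max_{l\in\mathcal{S}^c}$) inside the penalty term. This does not change the answers --- the penalties remain $(N-1)(m+1)^{\Gamma}\log(P)+o(\log(P))$ for constraint~3 and $(N-1)\bigl((m+1)^{\Gamma}-m^{\Gamma}\bigr)\log(P)+o(\log(P))$ for constraint~4 whether or not $1\in\mathcal{S}$, provided $|\mathcal{S}|\leq M-2$ --- but your justification of the constraint-4 penalty (``Transmitter~1's interference by itself spans the alignment subspace'') only covers the case $1\in\mathcal{S}^c$ and should be supplemented with the case $1\in\mathcal{S}$, where the remaining uncompromised transmitters' aligned vectors must be shown to still fill the $(N-1)(m+1)^{\Gamma}$-dimensional subspace up to an $l$-dependent residual of the stated size.
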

\begin{proof}
To satisfy secrecy constraint 3, we design an achievable scheme to
satisfy the following secrecy constraint:
\begin{eqnarray*}
\frac{1}{nF}H(\mathbf{W}_{(\Jj-j) \times
\mathcal{S}^c}|\barxY^n_j,\xX^n_{(\mathcal{J}-j) \times
\mathcal{S}}) \geq \sum_{(r,i)\in (\mathcal{J}-j) \times
\mathcal{S}^c}R_{ri}-\epsilon
\\ \forall \mathcal{S} \subset \mathcal{I}
\end{eqnarray*}
where
\begin{equation*}
\xX^n_{(\mathcal{J}-j) \times \mathcal{S}}= \{\xX^n_{ji}:\forall
(j,i) \in (\Jj-j) \times \mathcal{S}\}
\end{equation*}
$\xX^n_{ji}$ denotes the codeword for message $W_{ji}$. Note that
this secrecy constraint is stronger than
$\frac{1}{nF}H(\mathbf{W}_{(\Jj-j) \times
\mathcal{S}^c}|\barxY^n_j,\xW_{(\mathcal{J}-j) \times
\mathcal{S}})$. Because
\begin{eqnarray*}
&&H(\mathbf{W}_{(\Jj-j) \times
\mathcal{S}^c}|\barxY^n_j,\xW_{(\mathcal{J}-j) \times \mathcal{S}})\\
&\geq& H(\mathbf{W}_{(\Jj-j) \times
\mathcal{S}^c}|\barxY^n_j,\xW_{(\mathcal{J}-j) \times
\mathcal{S}},\xX^n_{(\mathcal{J}-j)
\times \mathcal{S}})\\
&=&H(\mathbf{W}_{(\Jj-j) \times
\mathcal{S}^c}|\barxY^n_j,\xX^n_{(\mathcal{J}-j) \times
\mathcal{S}})
\end{eqnarray*}
In other words, we want to ensure secrecy of any subset of
transmitters even if all other transmitters' codewords rather than
messages are revealed to the unintended receivers. This is possible
because the achievability scheme encodes the messages separately and
each message has its codewords. The coding scheme is similar to that
used in Theorem 1. Then it can be shown that the following secrecy
rate is achievable:
\begin{eqnarray*}
R_{ji}=\frac{1}{F}I(\xX_{ji};\barxY_j)-\frac{1}{F}\frac{1}{(M-|\mathcal{S}|)(N-1)}\times\\
\max_{k \in \Jj, \mathcal{S} \subset \Ii} I(\xX_{(\Jj-k)\times
\mathcal{S}^c};\barxY_k|\xX_{k \times
\mathcal{S}^c},\xX_{\mathcal{J} \times \mathcal{S}})\\\forall (j,i)
\in \Jj \times \mathcal{S}^c
\end{eqnarray*}
Consider the term $I(\xX_{(\Jj-k)\times
\mathcal{S}^c};\barxY_k|\xX_{k \times
\mathcal{S}^c},\xX_{\mathcal{J} \times \mathcal{S}})$. Following
similar analysis in Theorem 1, if $|\mathcal{S}|\leq M-2$, it can be
shown that
\begin{eqnarray*}
\max_{k \in \Jj, \mathcal{S} \subset \Ii} I(\xX_{(\Jj-k)\times
\mathcal{S}^c};\barxY_k|\xX_{k \times
\mathcal{S}^c},\xX_{\mathcal{J} \times
\mathcal{S}})\\=(N-1)(m+1)^{\Gamma}\log(P)+o(\log(P))
\end{eqnarray*}
Therefore,
\begin{eqnarray*}
R_{ji}=\frac{1}{F}(\eta-\frac{(m+1)^{\Gamma}}{M-|\mathcal{S}|})\log(P)+o(\log(P))\\
\forall (j,i) \in \Jj \times \mathcal{S}^c
\end{eqnarray*}
where $\eta=(m+1)^{\Gamma}$ when $i=1$ and $\eta=m^{\Gamma}$ when
$i=2,3,\ldots,M$. As $m \to \infty$,
\begin{eqnarray*}
R_{ji}=\frac{1}{M+N-1}(1-\frac{1}{M-|\mathcal{S}|})\log(P)+o(\log(P))\\
\forall (j,i) \in \Jj \times \mathcal{S}^c
\end{eqnarray*}
Therefore, each message can achieve
$\frac{1}{M+N-1}(1-\frac{1}{M-|\mathcal{S}|})$ secure degrees of
freedom for a total of $\frac{N(M-|\mathcal{S}|-1)}{M+N-1}$ secure
degrees of freedom under secrecy constraint 3.

Similarly, to satisfy secrecy constraint 4, we design an achievable
scheme to satisfy the following constraint:
\begin{eqnarray*}
&&\frac{1}{nF}H(\mathbf{W}_{(\Jj-j) \times
(\mathcal{S}^c-l)}|\barxY^n_j,\xX^n_{(\mathcal{J}-j) \times
\mathcal{S}})
\\ &\geq& \sum_{(r,i)\in (\mathcal{J}-j) \times
(\mathcal{S}^c-l)}R_{ri}-\epsilon ~~~~\forall \mathcal{S} \subset
\mathcal{I},~\forall l \in \Uu^c
\end{eqnarray*}
Then it can be shown that the following secure rate is achievable:
\begin{eqnarray*}
R_{ji} = \frac{1}{F}I(\xX_{ji};\barxY_j)-
\frac{1}{F}\frac{1}{(M-|\mathcal{S}|-1)(N-1)} \times \\ \max_{k \in
\Jj, l \in \mathcal{S}^c, \mathcal{S}\subset
\Ii}I(\xX_{(\Jj-k)\times(\mathcal{S}^c-l)};\barxY_k|\xX_{k \times
\mathcal{S}^c},\xX_{\mathcal{J} \times \mathcal{S}}) \\ \forall
(j,i) \in \Jj \times \mathcal{S}^c \notag
\end{eqnarray*}
Following similar analysis in Theorem 2, if $|\mathcal{S}|\leq M-2$,
it can be shown that
\begin{eqnarray*}
\max_{k \in \Jj, l \in \mathcal{S}^c, \mathcal{S}\subset
\Ii}I(\xX_{(\Jj-k)\times(\mathcal{S}^c-l)};\barxY_k|\xX_{k \times
\mathcal{S}^c},\xX_{\mathcal{J} \times
\mathcal{S}})\\=(N-1)((m+1)^{\Gamma}-m^{\Gamma})\log(P)+o(\log(P))
\end{eqnarray*}
Therefore,
\begin{eqnarray*}
R_{ji}=\frac{(M-|\mathcal{S}|-1)\eta-((m+1)^{\Gamma}-m^{\Gamma})}{F(M-|\mathcal{S}|-1)}\log(P)+o(\log(P))\\
\forall (j,i) \in \Jj \times \mathcal{S}^c
\end{eqnarray*}
where $\eta=(m+1)^{\Gamma}$ when $i=1$ and $\eta=m^{\Gamma}$ when
$i=2,3,\ldots,M$. As $m \to \infty$,
\begin{eqnarray*}
R_{ji}=\frac{1}{M+N-1}\log(P)+o(\log(P))~ \forall (j,i) \in \Jj
\times \mathcal{S}^c
\end{eqnarray*}
Therefore, each message can achieve $\frac{1}{M+N-1}$ secure degrees
of freedom for a total of $\frac{N(M-|\mathcal{S}|)}{M+N-1}$ secure
degrees of freedom.
\end{proof}

\section{The $K$ user Gaussian interference channel with confidential messages}
In this section, we consider the $K$ user Gaussian interference
channel with confidential messages. In \cite{onur}, this
interference channel with confidential messages is considered under
secrecy constraint 1, i.e.,
\begin{eqnarray*}
\frac{1}{nF}H(\mathbf{W}_{(\mathcal{K}-j)}|\barxY_j^n)\geq \sum_{i
\in (\mathcal{K}-j)}R_{i}-\epsilon \quad \forall j \in
\mathcal{K}=\{1,2,\ldots,K\}
\end{eqnarray*}
It is shown that each user can achieve $\frac{K-2}{2K-2}$ secure
degrees of freedom. However, we consider the same channel under
secrecy constraint 2, i.e.,
\begin{eqnarray*}
\frac{1}{nF}H(\mathbf{W}_{(\mathcal{K}-j-m)}|\barxY_j^n)\geq \sum_{i
\in (\mathcal{K}-j-m)}R_{i}-\epsilon \\ \forall j,m \in
\mathcal{K}=\{1,2,\ldots,K\}, j \neq m
\end{eqnarray*}
and secrecy constraint 4, i.e.,
\begin{eqnarray*}
\frac{1}{nF}H(\mathbf{W}_{(\mathcal{S}^c-j-m)}|\barxY^n_j,\xW_{\mathcal{S}})
\geq \sum_{i\in (\mathcal{S}^c-j-m)}R_{i}-\epsilon\\ \forall m \in
\mathcal{S}^c, ~ j \neq m,~\forall \mathcal{S} \subset
\mathcal{K}=\{1,2,\ldots,K\}
\end{eqnarray*}
where $\mathcal{S}$ is the set of users that are compromised.
Interestingly, we show that for these two scenarios, each message
can achieve $\frac{1}{2}$ secure degrees of freedom which is the
same as what one can achieve without secrecy constraint. We present
the results in the following theorems:
\begin{theorem}
For the $K$ user Gaussian interference channel with single antenna
nodes, each user can achieve $\frac{1}{2}$ secure degrees of freedom
for a total of $\frac{K}{2}$ secure degrees of freedom under secrecy
constraint 2.
\end{theorem}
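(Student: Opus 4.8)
The plan is to mirror the proofs of Theorems~1 and 2, specialized to the interference channel, where each transmitter $i$ carries a single message $W_i$ for Receiver $i$. The scheme combines the asymptotic interference alignment construction for the $K$-user interference channel (the same one used in \cite{onur}) with the random binning coding scheme of \cite{Ruoheng:dmib}; the point is that replacing secrecy constraint~1 by the weaker secrecy constraint~2 makes the per-message secrecy penalty negligible in the degrees-of-freedom limit, so no degrees of freedom are lost.

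First I would fix the $F$-symbol extension channel with $F=(m+1)^{\Gamma}+m^{\Gamma}$ and $\Gamma=(K-1)(K-2)$, $m\in\mathbb{N}$. Transmitter $1$ encodes $W_1$ into $(m+1)^{\Gamma}$ independent streams $\xX_1$ and Transmitter $i$, $i\neq 1$, encodes $W_i$ into $m^{\Gamma}$ independent streams $\xX_i$, each with random binning to inject the randomness needed for secrecy, and then sends $\mathbf{V}_i\xX_i$, where the precoders $\mathbf{V}_i$ are those of the $K$-user interference alignment construction. With this choice, at each receiver the desired streams span a subspace disjoint from the span of all interference, so zero forcing recovers $W_i$ reliably and $I(\xX_i;\barxY_i)=\eta\log P+o(\log P)$ with $\eta=(m+1)^{\Gamma}$ for $i=1$ and $\eta=m^{\Gamma}$ for $i\neq 1$. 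Following the same random-binning argument as in Theorems~1 and 2, the achievable secure rate is
\begin{equation*}
R_i=\frac{1}{F}I(\xX_i;\barxY_i)-\frac{1}{F}\frac{1}{K-2}\max_{k,l\in\mathcal{K},\,l\neq k}I(\xX_{\mathcal{K}-k-l};\barxY_k|\xX_k),
\end{equation*}
where $\xX_{\mathcal{K}-k-l}$ denotes the codewords of the transmitters in $\mathcal{K}$ with $k$ and $l$ removed, and the factor $\tfrac{1}{K-2}$ reflects that the leakage at Receiver $k$ is shared equally among the $K-2$ messages of $\mathbf{W}_{(\mathcal{K}-k-l)}$ (in the notation of secrecy constraint~2 for the interference channel, $l$ plays the role of the excluded index $m$).

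The crux is to evaluate the penalty term from the alignment geometry. Conditioning on $\xX_k$ removes the desired signal from $\barxY_k$, leaving the interference $\sum_{i\neq k}\barxH_{ki}\mathbf{V}_i\xX_i$; since the streams of the excluded Transmitter $l$ are \emph{not} conditioned on, they act as cover, and $I(\xX_{\mathcal{K}-k-l};\barxY_k|\xX_k)$ equals, in degrees of freedom, the dimension of the total interference subspace at Receiver $k$ minus the dimension of Transmitter $l$'s interference subspace there. By the alignment construction the total interference at Receiver $k$ spans $(m+1)^{\Gamma}$ dimensions when $k\neq 1$ (it is the column span of $\barxH_{k1}\mathbf{V}_1$, which contains every $\barxH_{ki}\mathbf{V}_i$ with $i\neq 1,k$) and $m^{\Gamma}$ dimensions when $k=1$, while Transmitter $l$'s interference at Receiver $k$ spans $(m+1)^{\Gamma}$ dimensions if $l=1$ and $m^{\Gamma}$ dimensions if $l\neq 1$. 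Hence the penalty is $o(\log P)$ whenever $l=1$ or $k=1$, and otherwise at most $((m+1)^{\Gamma}-m^{\Gamma})\log P+o(\log P)$, so the maximum over $k,l$ equals $((m+1)^{\Gamma}-m^{\Gamma})\log P+o(\log P)$. Substituting gives
\begin{equation*}
R_i=\frac{1}{F}\Big(\eta-\frac{(m+1)^{\Gamma}-m^{\Gamma}}{K-2}\Big)\log P+o(\log P),
\end{equation*}
and since $(m+1)^{\Gamma}-m^{\Gamma}=o(m^{\Gamma})$ while $F=(m+1)^{\Gamma}+m^{\Gamma}$ and $\eta\in\{(m+1)^{\Gamma},m^{\Gamma}\}$, letting $m\to\infty$ yields $\tfrac12$ secure degrees of freedom for every message, hence $\tfrac{K}{2}$ in total.

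The step I expect to be the main obstacle is making the achievable-secure-rate expression rigorous: one must verify that a single random-binning codebook simultaneously meets the reliability requirement and secrecy constraint~2 for \emph{every} pair $(k,l)$ --- i.e., that binning randomness of total rate $\tfrac{1}{F}I(\xX_{\mathcal{K}-k-l};\barxY_k|\xX_k)$, distributed over the $K-2$ messages of $\mathbf{W}_{(\mathcal{K}-k-l)}$, suffices to drive the equivocation to the required level uniformly over the choice of excluded transmitter. The accompanying alignment bookkeeping --- that the excluded transmitter's streams shrink the leaked subspace from $(m+1)^{\Gamma}$ down to the gap $(m+1)^{\Gamma}-m^{\Gamma}$ --- is the conceptual heart and is exactly the mechanism, already seen in Theorem~2, that removes the degrees-of-freedom loss.
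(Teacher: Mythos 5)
Your proposal is correct and takes essentially the same route the paper intends: the paper omits this proof entirely, saying only that it is ``similar to the proof of Theorem 2,'' and your argument is exactly that specialization --- the $K$-user interference-alignment precoders combined with random binning, the achievable rate with the penalty term normalized by the $K-2$ messages being secured, and the observation that the excluded transmitter's aligned streams act as cover so the leaked subspace shrinks to the $(m+1)^{\Gamma}-m^{\Gamma}$ gap, which vanishes relative to $F$ as $m\to\infty$. (The only quibble is the exponent, which in the Cadambe--Jafar interference-channel construction is $(K-1)(K-2)-1$ rather than $(K-1)(K-2)$, but this does not affect the limit.)
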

\begin{proof}
The proof is similar to the proof of Theorem 2 and is omitted here.
\end{proof}
\begin{theorem}
For the $K$ user Gaussian interference channel with single antenna
nodes, even if any subset of users, $\mathcal{S} \subset
\{1,2,\ldots,K\}$ is compromised, then the remaining
$K-|\mathcal{S}|$ users can still achieve $\frac{1}{2}$ secure
degrees of freedom for each message for a total of
$\frac{K-|\mathcal{S}|}{2}$ secure degrees of freedom as long as
$|\mathcal{S}|<K-2$.
\end{theorem}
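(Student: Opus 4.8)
The plan is to run the argument of the secrecy-constraint-4 part of Theorem 3 on the $K$-user interference channel, regarded as the degenerate $X$ network in which transmitter $i$ carries only the single message $W_{i}$ meant for receiver $i$, with the $K$-user interference alignment scheme and the dimension counting of Theorem 4 replacing their $X$-network counterparts; the same code is used no matter which set $\mathcal{S}$ turns out to be compromised. Concretely, I would fix the $F$-symbol extension with $F=(n+1)^{\Gamma}+n^{\Gamma}$ and $\Gamma=(K-1)(K-2)$, and use the alignment scheme of \cite{onur,Cadambe_Jafar_X}: message $W_{1}$ is carried by $(n+1)^{\Gamma}$ streams $\mathbf{X}_{1}$ and each $W_{i}$, $i\neq1$, by $n^{\Gamma}$ streams $\mathbf{X}_{i}$, precoded so that at every receiver the desired signal is interference-free while all the interference is aligned into the $(n+1)^{\Gamma}$-dimensional footprint of transmitter $1$ (with every other transmitter occupying only an $n^{\Gamma}$-dimensional sub-subspace of it), collapsing to an $n^{\Gamma}$-dimensional subspace at receiver $1$. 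On top of this, each message is independently wrapped in a Wyner-style random-binning code exactly as in the Appendix proof of Theorem 1.

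Repeating the equivocation computation of Theorem 3 verbatim --- now revealing to the unintended receivers the \emph{codewords} $\mathbf{X}_{\mathcal{S}}^{n}$ of the compromised users, which is stronger than revealing only $\mathbf{W}_{\mathcal{S}}$ and legitimate because each message is binned separately --- one reaches the achievable secure rate
\begin{equation*}
R_{i}=\frac{1}{F}I(\mathbf{X}_{i};\barxY_{i})-\frac{1}{F}\,\frac{1}{K-|\mathcal{S}|-2}\max_{\substack{\mathcal{S}\subset\mathcal{K},\ m\in\mathcal{S}^{c}\\ j\neq m}}I(\mathbf{X}_{\mathcal{S}^{c}-j-m};\barxY_{j}\mid\mathbf{X}_{j},\mathbf{X}_{\mathcal{S}}),
\end{equation*}
which is meaningful exactly when $|\mathcal{S}|<K-2$, since that is the condition making the normalizing factor $K-|\mathcal{S}|-2=|\mathcal{S}^{c}-j-m|$ positive (for $|\mathcal{S}|\geq K-2$ the secured index set is empty and the constraint is vacuous). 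From the alignment scheme, $\frac{1}{F}I(\mathbf{X}_{i};\barxY_{i})=\frac{\eta_{i}}{F}\log P+o(\log P)$ with $\eta_{1}=(n+1)^{\Gamma}$ and $\eta_{i}=n^{\Gamma}$ for $i\neq1$, so $\eta_{i}/F\to\tfrac12$ as $n\to\infty$.

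The hard part will be the penalty term, which I would evaluate by the entropy-difference dimension count used in Theorem 2. Writing it as $h(\barxY_{j}\mid\mathbf{X}_{j},\mathbf{X}_{\mathcal{S}})-h(\barxY_{j}\mid\mathbf{X}_{\mathcal{S}^{c}-j-m},\mathbf{X}_{j},\mathbf{X}_{\mathcal{S}})$: once the desired and compromised contributions are removed, the residual at receiver $j$ is the aligned interference of $\mathcal{S}^{c}-j$, which carries $(n+1)^{\Gamma}$ degrees of freedom when $j\neq1$ and $1\notin\mathcal{S}$ (and no more than that otherwise); removing in addition the secured set $\mathcal{S}^{c}-j-m$ leaves only transmitter $m$'s signal, which carries $n^{\Gamma}$ degrees of freedom when $m\neq1$ and $(n+1)^{\Gamma}$ when $m=1$. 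The difference is therefore maximized by $j\neq1$, $1\notin\mathcal{S}$ and $m\notin\{1,j\}$, where it equals $\big((n+1)^{\Gamma}-n^{\Gamma}\big)\log P+o(\log P)$, while choosing $m=1$ (the analogue of the ``$l=1$ gives zero degrees of freedom'' case of Theorem 2), or $j=1$, or $1\in\mathcal{S}$, drives it down to $o(\log P)$. The step I expect to be the main obstacle is verifying that the secured set's interference together with the known signals really does pin down the whole $(n+1)^{\Gamma}$-dimensional interference footprint at receiver $j$ except for transmitter $m$'s $n^{\Gamma}$ dimensions; this rests on the generic full-rank properties of products of diagonal channel matrices and precoders, exactly as in the alignment analysis of \cite{Cadambe_Jafar_X} underlying Theorems 1 and 2.

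Finally, since $(n+1)^{\Gamma}-n^{\Gamma}=\Gamma n^{\Gamma-1}+O(n^{\Gamma-2})=o(F)$, substituting gives, as $n\to\infty$,
\begin{equation*}
R_{i}=\frac{\eta_{i}}{F}\log P-\frac{(n+1)^{\Gamma}-n^{\Gamma}}{F(K-|\mathcal{S}|-2)}\log P+o(\log P)\longrightarrow\tfrac12\log P,
\end{equation*}
so every message attains $\tfrac12$ secure degrees of freedom, and the $K-|\mathcal{S}|$ uncompromised users attain $\frac{K-|\mathcal{S}|}{2}$ in total, simultaneously for every compromised set with $|\mathcal{S}|<K-2$.
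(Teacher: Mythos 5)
Your proposal is correct and follows essentially the same route as the paper, which itself only remarks that the proof is ``similar to the proof of Theorem 3'': you instantiate the secrecy-constraint-4 argument of Theorem 3 on the $K$-user interference channel with the standard $K$-user alignment scheme, obtain the analogous achievable rate with penalty normalized by $K-|\mathcal{S}|-2$ (whence the condition $|\mathcal{S}|<K-2$), and observe that the penalty $\bigl((n+1)^{\Gamma}-n^{\Gamma}\bigr)/F$ vanishes, giving $\tfrac12$ secure degrees of freedom per remaining message.
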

\begin{proof}
The proof is similar to the proof of Theorem 3 and is omitted here.
\end{proof}

\section{Conclusion}
In this work, we obtain the achievable secure degrees of freedom for
the $M \times N$ user $X$ network under different secrecy
constraints. We also obtain the achievable secure degrees of freedom
for the $K$ user Gaussian interference channel under two different
secrecy constraints. We can see another advantage of interference
alignment, i.e., interference signals are aligned along the same
dimensions to assist secrecy in wireless communications.

\section{Appendix}
\subsection{Proof of Theorem 1}
\label{app:proof}
\begin{proof}
Let $\Gamma =(N-1)(M-1)$ and $F=N(m+1)^{\Gamma}+(M-1)m^{\Gamma},
\forall m \in \mathbb{N}$. Over the $F$ symbol extension channel,
for each message $W_{ji}$, we generate
$2^{nF(R_{ji}+R^1_{ji}+R^2_{ji}+\cdots+R^{N-1}_{ji}+R^{\dagger}_{ji})}$
codewords each of length $nm_i$, where $m_1=(m+1)^{\Gamma}$,
$m_i=m^{\Gamma},\forall i=2,3,\ldots,M$ . Each element of the
codewords is i.i.d. $\sim \mathcal{CN}(0,\frac{P-\epsilon}{c})$ such
that the power constraint is satisfied. We denote the codeword as
\begin{equation*}
\xX^n(w_{ji},b^1_{ji},b^2_{ji},\ldots,b^{N-1}_{ji},b^{\dagger}_{ji})=[\xX_{ji}(1)~\cdots~\xX_{ji}(n)].
\end{equation*}
where $w_{ji} \in \{1,\ldots,2^{nFR_{ji}}\}$, $b^k_{ji} \in
\{1,\ldots,2^{nFR^k_{ji}}\},\forall k=1,\cdots,N-1$,
$b^{\dagger}_{ji} \in \{1,\ldots,2^{nFR^{\dagger}_{ji}}\}$ and
$\xX_{ji}(t)$ is an $m_i \times 1$ vector. This can be interpreted
as the codebook is first partitioned into $2^{nFR_{ji}}$ message
bins and then each bin is divided into $2^{nFR^1_{ji}}$ sub-bins
which we refer to the first layer of sub-bins. Each sub-bin in the
first layer is further divided into $2^{nFR^2_{ji}}$ sub-bins which
comprise the second layer. Such partition is repeated until the
$(N-1)^{th}$ layer. Each sub-bin in the last layer contains
$2^{nFR^{\dagger}_{ji}}$ codewords. Hence,
$w_{ji},b^1_{ji},\ldots,b^{N-1}_{ji}$ represent the message bin and
the sub-bin indexes of the $k^{th}, \forall k=1,\cdots,N-1$ layer
respectively.

Now, to send a message $w_{ji}$, Transmitter $i$ looks into the
message bin $w_{ji}$ and randomly selects a sub-bin $b^1_{ji}$ in
the first layer, sub-bin $b^2_{ji}$  in the second layer and so on
according to the uniform distribution. In the sub-bin of the last
layer a codeword $b^{\dagger}_{ji}$ is chosen uniformly over
$\{1,\dots,2^{nFR^{\dagger}_{ji}}\}$. Here, it obtains a codeword
$\xX^n(w_{ji},b^1_{ji},\cdots,b^{N-1}_{ji},b^{\dagger}_{ji})=[\xX_{ji}(1),\cdots,\xX_{ji}(n)]$.
For each time slot $t \in \{1,\ldots,n\}$, Transmitter $i$ employs
the interference alignment scheme mapping $\xX_{ji}(t)$ to
$\mathbf{V}_{ji}(t)\xX_{ji}(t)$ where $\mathbf{V}_{ji}$ is the $F
\times m_i$ matrix. At last, Transmitter $i$ sends signal
$\barxX_i(t)=\sum_{j=1}^N\mathbf{V}_{ji}(t)\xX_{ji}(t)$ into the
channel. Note that the pre-coding matrices $\mathbf{V}_{ji}(t)$ are
chosen as given in \cite{Cadambe_Jafar_X}.

Without loss of generality, we assume
\begin{eqnarray*}
I(\xX_{(\Jj-1)\times \Ii};\barxY_1|\xX_{1 \times \Ii}) <
I(\xX_{(\Jj-2)\times \Ii};\barxY_2|\xX_{2 \times \Ii})\\ < \cdots <
I(\xX_{(\Jj-N)\times \Ii};\barxY_N|\xX_{N \times \Ii})
\end{eqnarray*}
where $\xX_{(\Jj-j)\times \Ii}=\{\xX_{ri}: \forall (r,i) \in
(\Jj-j)\times \Ii\}$ and $\xX_{r \times \Ii}=\{\xX_{ri}: \forall i
\in \Ii\}$. We choose rates $R_{ji},
R^1_{ji},\cdots,R^{N-1}_{ji},R^{\dagger}_{ji}$ as follows
\begin{eqnarray*}
&R_{ji} =
\frac{1}{F}I(\xX_{ji};\barxY_j)\notag-\frac{1}{F}\frac{1}{M(N-1)}I(\xX_{(\Jj-N)\times
\Ii};\barxY_N|\xX_{N\times \Ii})\\ &R^1_{ji} =
\frac{1}{F}\frac{1}{M(N-1)}[I(\xX_{(\Jj-N)\times
\Ii};\barxY_N|\xX_{N\times \Ii}) \notag
\\&-I(\xX_{(\Jj-N+1)\times
\Ii};\barxY_{N-1}|\xX_{(N-1)\times \Ii})] \notag\\
&\vdots \notag
\end{eqnarray*}
\begin{eqnarray}
&R^k_{ji} = \frac{1}{F}\frac{1}{M(N-1)} \times
\notag\\
&[I(\xX_{(\Jj-N+k-1)\times \Ii};\barxY_{N-k+1}|\xX_{(N-k+1)\times
\Ii}) \notag
\\&-I(\xX_{(\Jj-N+k)\times \Ii};\barxY_{N-k}|\xX_{(N-k)\times \Ii})]\label{rk}\\
&\vdots \notag\\
&R^{N-1}_{ji}  = \frac{1}{F}\frac{1}{M(N-1)}[I(\xX_{(\Jj-2)\times
\Ii};\barxY_2|\xX_{2\times \Ii})\notag
\\&-I(\xX_{(\Jj-1)\times
\Ii};\barxY_{1}|\xX_{1\times \Ii})]\\
&R^{\dagger}_{ji} = \frac{1}{F}\frac{1}{M(N-1)}I(\xX_{(\Jj-1)\times
\Ii};\barxY_{1}|\xX_{1\times \Ii}))-\epsilon \label{rdagger}
\end{eqnarray}
Note that
$R_{ji}+R^1_{ji}+\cdots+R^{\dagger}_{ji}=\frac{1}{F}I(\xX_{ji};\barxY_j)-\epsilon$.
Next, we will show this scheme satisfies both the reliability
requirement and the secrecy constraint.

Since
$R_{ji}+R^1_{ji}+\cdots+R^{\dagger}_{ji}=\frac{1}{F}I(\xX_{ji};\barxY_j)-\epsilon
< \frac{1}{F}I(\xX_{ji};\barxY_j)$, each user can decode its desired
streams reliably.

To ensure the secrecy constraint 1, we need to show
\begin{eqnarray*}
H(\mathbf{W}_{(\Jj-j)\times \Ii}|\barxY_j^n) \geq \sum_{(r,i)\in
(\mathcal{J}-j) \times \mathcal{I}}R_{ri}-\epsilon \\
\Jj=\{1,\ldots,N\},~\Ii=\{1,\ldots,M\}
\end{eqnarray*}
We consider the following equivocation lower bound
\begin{equation}
H(\mathbf{W}_{(\Jj-j) \times \Ii}|\barxY_j^n) \geq
H(\mathbf{W}_{(\Jj-j) \times \Ii}|\barxY_j^n,\xx_{j\times
\Ii}^n)\label{lowerbound}
\end{equation}
where the inequality is due to the fact that conditioning reduces
entropy.
\begin{eqnarray}
&&H(\mathbf{W}_{(\Jj-j) \times \Ii}|\barxY_j^n,\xx_{j\times \Ii}^n)\notag\\
&=& H(\mathbf{W}_{(\Jj-j) \times \Ii},\barxY_j^n|\xx_{j\times
\Ii}^n)-H(\barxY_j^n|\xx_{j\times \Ii}^n)\\
&\geq & H(\mathbf{W}_{(\Jj-j) \times \Ii},\barxY_j^n|\xx_{j\times
\Ii}^n,\mathcal{B}^j_{(\Jj-j) \times \Ii})-H(\barxY_j^n|\xx_{j\times
\Ii}^n)\notag\\ \label{step1}
\end{eqnarray}
where $\mathcal{B}^j_{(\Jj-j) \times \Ii}=\{\mathbf{B}^1_{(\Jj-j)
\times \Ii},\mathbf{B}^2_{(\Jj-j) \times
\Ii},\cdots,\mathbf{B}^{N-j}_{(\Jj-j) \times \Ii}\}$ and
$\mathbf{B}^k_{(\Jj-j) \times \Ii}=\{B^k_{ri}:\forall (r,i) \in
(\Jj-j) \times \Ii\}, \forall k=1,\cdots,N-1$ denotes the set of all
the sub-bin indexes of the $k^{th}$ layer for all codewords
$\xx^n_{(\Jj-j) \times \Ii}$. $B^k_{ri}$ denotes the sub-bin index
in the $k^{th}$ layer for codeword $\xx^n_{ji}$ and is uniformly
distributed over $\{1,\ldots,2^{nFR^k_{ji}}\}$. Then, the first term
of \eqref{step1} can be written as
\begin{align}
&~~~~H(\mathbf{W}_{(\Jj-j) \times \Ii},\barxY_j^n|\xx_{j\times
\Ii}^n,\mathcal{B}^j_{(\Jj-j) \times
\Ii})\notag\\&=H(\mathbf{W}_{(\Jj-j) \times
\Ii},\barxY_j^n,\xx^n_{(\Jj-j) \times \Ii}|\xx_{j\times
\Ii}^n,\mathcal{B}^j_{(\Jj-j) \times
\Ii})\notag\\&~~~-H(\xx^n_{(\Jj-j) \times \Ii}|\mathbf{W}_{(\Jj-j)
\times \Ii},\barxY_j^n,\xx_{j\times \Ii}^n,\mathcal{B}^j_{(\Jj-j)
\times
\Ii})\notag\\
&=H(\mathbf{W}_{(\Jj-j) \times \Ii},\xx^n_{(\Jj-j) \times
\Ii}|\xx_{j\times \Ii}^n,\mathcal{B}^j_{(\Jj-j) \times
\Ii})\notag\\&~~~+H(\barxY_j^n|\mathbf{W}_{(\Jj-j) \times
\Ii},\xx^n_{(\Jj-j) \times \Ii},\xx_{j\times
\Ii}^n,\mathcal{B}^j_{(\Jj-j) \times
\Ii})\notag\\
&~~~-H(\xx^n_{(\Jj-j) \times \Ii}|\mathbf{W}_{(\Jj-j) \times
\Ii},\barxY_j^n,\xx_{j\times \Ii}^n,\mathcal{B}^j_{(\Jj-j) \times
\Ii})\notag\\
&=H(\mathbf{W}_{(\Jj-j) \times \Ii},\xx^n_{(\Jj-j) \times
\Ii}|\mathcal{B}^j_{(\Jj-j) \times
\Ii})\notag\\&~~~+H(\barxY_j^n|\xx^n_{(\Jj-j) \times
\Ii},\xx_{j\times
\Ii}^n)\notag\\
&~~~-H(\xx^n_{(\Jj-j) \times \Ii}|\mathbf{W}_{(\Jj-j) \times
\Ii},\barxY_j^n,\xx_{j\times \Ii}^n,\mathcal{B}^j_{(\Jj-j) \times
\Ii})\label{step2}
\end{align}
where
\begin{eqnarray*}
&H(\mathbf{W}_{(\Jj-j) \times \Ii},\xx^n_{(\Jj-j) \times
\Ii}|\xx_{j\times \Ii}^n,\mathcal{B}^j_{(\Jj-j) \times
\Ii})\\&=H(\mathbf{W}_{(\Jj-j) \times \Ii},\xx^n_{(\Jj-j) \times
\Ii}|\mathcal{B}^j_{(\Jj-j) \times \Ii})~~~~~~~~~~
\end{eqnarray*}
since $\mathbf{W}_{(\Jj-j) \times \Ii}$, $\xx^n_{(\Jj-j) \times
\Ii}$ are independent of $\xx_{j\times \Ii}^n$, and
\begin{eqnarray*}
&H(\barxY_j^n|\mathbf{W}_{(\Jj-j) \times \Ii},\xx^n_{(\Jj-j) \times
\Ii},\xx_{j\times \Ii}^n,\mathcal{B}^j_{(\Jj-j) \times
\Ii})\\&=H(\barxY_j^n|\xx^n_{(\Jj-j) \times \Ii},\xx_{j\times
\Ii}^n)~~~~~~~~~~~~~~~~~~~~~~~~~~~~~~~
\end{eqnarray*}
due to the Markov chain
\begin{equation*}
(\mathbf{W}_{(\Jj-j) \times \Ii},\mathcal{B}^j_{(\Jj-j) \times \Ii})
\to (\xx^n_{(\Jj-j) \times \Ii},\xx_{j\times \Ii}^n) \to \barxY_j^n
\end{equation*}
Hence, from \eqref{lowerbound}, \eqref{step1}, \eqref{step2}, we
obtain
\begin{eqnarray}
&&H(\mathbf{W}_{(\Jj-j) \times \Ii}|\barxY_j^n) \notag\\&&\geq
H(\mathbf{W}_{(\Jj-j) \times \Ii},\xx^n_{(\Jj-j) \times
\Ii}|\mathcal{B}^j_{(\Jj-j) \times
\Ii})\notag\\&&+H(\barxY_j^n|\xx^n_{(\Jj-j) \times \Ii},\xx_{j\times
\Ii}^n)\notag -H(\barxY_j^n|\xx_{j\times
\Ii}^n)\\&&-H(\xx^n_{(\Jj-j) \times \Ii}|\mathbf{W}_{(\Jj-j) \times
\Ii},\barxY_j^n,\xx_{j\times \Ii}^n,\mathcal{B}^j_{(\Jj-j) \times
\Ii}) \notag\\&& \geq H(\xx^n_{(\Jj-j) \times
\Ii}|\mathcal{B}^j_{(\Jj-j) \times \Ii})-I(\xx^n_{(\Jj-j)\times
\Ii},\barxY_j^n|\xx_{j\times \Ii}^n)\notag\\&&-H(\xx^n_{(\Jj-j)
\times \Ii}|\mathbf{W}_{(\Jj-j) \times \Ii},\barxY_j^n,\xx_{j\times
\Ii}^n),\mathcal{B}^j_{(\Jj-j) \times \Ii})\label{bound}
\end{eqnarray}
We now bound each term in $\eqref{bound}$. Consider the first term.
Note that given the first to $(N-j)^{th}$ layers' sub-bin indexes,
$\xx^n_{(\Jj-j) \times \Ii}$ has $2^{nF\sum_{(r,i)\in
\{\Jj-j\}\times
\Ii}(R^{N-j+1}_{ri}+\cdots+R^{N-1}_{ri}+R^{\dagger}_{ri})}$ possible
values with equal probability. Hence
\begin{eqnarray}
&&~~~H(\xx^n_{(\Jj-j) \times \Ii}|\mathcal{B}^j_{(\Jj-j) \times
\Ii})\notag\\&&= nF\sum_{(r,i)\in(\Jj-j)\times
\Ii}(R_{ri}+R^{N-j+1}_{ri}+\cdots+R^{N-1}_{ri}+R^{\dagger}_{ri}) \notag\\
&&=nF\sum_{(r,i)\in(\Jj-j)\times \Ii}R_{ri}+ n I(\xX_{(\Jj-j)\times
\Ii};\barxY_j|\xX_{j \times \Ii})-\epsilon_1\notag\\ \label{b1}
\end{eqnarray}
where the last step follows from $\eqref{rk}$ and \eqref{rdagger}.
$\epsilon_1 \to 0$ as $n \to \infty$. Second, we can bound
\begin{equation}
I(\xx^n_{(\Jj-j) \times \Ii},\barxY_j^n|\xx_{j\times \Ii}^n)\leq
nI(\xx_{(\Jj-j) \times \Ii},\barxY_j|\xx_{j\times
\Ii})+n\epsilon_2\label{b3}
\end{equation}
where $\epsilon_2 \to 0$ as $n \to \infty$. Finally, the third term
can be bounded as follows
\begin{equation} H(\xx^n_{(\Jj-j) \times
\Ii}|\mathbf{W}_{(\Jj-j) \times \Ii},\barxY_j^n,\xx_{j\times
\Ii}^n,\mathcal{B}^j_{(\Jj-j) \times \Ii})\leq n\epsilon_3\label{b2}
\end{equation}
where $\epsilon_3 \to 0$ as $n \to \infty$. This is because Receiver
$j$ can decode the codeword $\xx^n_{(\Jj-j) \times \Ii}$ given the
message, the first to $(N-j)^{th}$ layers' sub-bin indexes and the
observation $\barxY_j^n$. Then, Fano's inequality implies
\eqref{b2}.

From $\eqref{b1}$, $\eqref{b3}$ and $\eqref{b2}$ , we can write
\eqref{bound} as
\begin{eqnarray*}
&&\frac{1}{nF}H(\mathbf{W}_{(\Jj-j) \times \Ii}|\barxY_j^n)\\
&&\geq \sum_{(r,i)\in(\Jj-j)\times \Ii}R_{ri}+
\frac{1}{F}I(\xX_{(\Jj-j)\times \Ii};\barxY_j|\xX_{j \times
\Ii})\\&&-\frac{1}{F}I(\xX_{(\Jj-j)\times \Ii};\barxY_j|\xX_{j
\times \Ii})-\epsilon_1-\epsilon_2-\epsilon_3
\end{eqnarray*}
Hence, security condition is satisfied at Receiver $j$. Therefore,
the following secrecy rate is achievable:
\begin{equation*}
R_{ji}=\frac{1}{F}I(\xX_{ji};\barxY_j)-\frac{1}{F}\frac{1}{M(N-1)}I(\xX_{(\Jj-N)\times
\Ii};\barxY_N|\xX_{N\times \Ii})
\end{equation*}
From \cite{Cadambe_Jafar_X}, we have
\begin{equation*}
I(\xX_{ji};\barxY_j)=\eta \log(P)+o(\log(P))
\end{equation*}
where $\eta=(m+1)^{\Gamma}$ when $i=1$ and $\eta=m^{\Gamma}$ when
$i=2,3,\ldots,M$. At each receiver, the interference vectors from
Transmitter $2,3\ldots,M$ are aligned perfectly with the
interference from Transmitter 1. Then, we have
\begin{equation*}
I(\xX_{(\Jj-N)\times \Ii};\barxY_N|\xX_{N\times
\Ii})=(N-1)(m+1)^{\Gamma}\log(P)+o(\log(P))
\end{equation*}
Hence,
\begin{eqnarray*}
R_{ji}&=&\frac{1}{F}I(\xX_{ji};\barxY_j)-\frac{1}{F}\frac{1}{M(N-1)}I(\xX_{(\Jj-N)\times
\Ii};\barxY_N|\xX_{N\times \Ii})\\
&=&\frac{1}{F}(m+1)^{\Gamma}(1-\frac{1}{M})\log(P)+o(\log(P))~~i=1
\end{eqnarray*}
and
\begin{equation*}
R_{ji}=
\frac{1}{F}(m^{\Gamma}-\frac{(m+1)^{\Gamma}}{M})\log(P)+o(\log(P))~~i=2,\ldots,M
\end{equation*}
As $m \to \infty$, we have
\begin{eqnarray*}
R_{ji}=\frac{M-1}{M(M+N-1)}\log(P)+o(\log(P))\\ \forall (j,i)\in
\{1,\ldots,N\} \times \{1,2,\ldots,M\}
\end{eqnarray*}
As a result, each message can achieve
$\eta_{ji}=\frac{M-1}{M(M+N-1)}$ secure degrees of freedom.
Therefore, for a total of $MN$ messages, we can achieve a total of
$\frac{N(M-1)}{M+N-1}$ secure degrees of freedom. The proof is
complete.
\end{proof}

\end{document}